\documentclass[aps,english,prb,floatfix,amsmath,superscriptaddress,tightenlines,twocolumn]{revtex4-2}

\usepackage{amsmath}
\usepackage{amssymb}
\usepackage{amsthm}
\usepackage{amsfonts}

\usepackage{graphicx,color,framed}
\usepackage{hyperref}
\usepackage{times}
\usepackage{enumerate}
\usepackage{lipsum}
\usepackage{slashed}
\usepackage{url}
\usepackage{comment}
\usepackage{bbm}
\usepackage{nicematrix}

\usepackage[caption=false]{subfig}

\usepackage{tikz}
\usetikzlibrary{shapes.geometric}
\usetikzlibrary{arrows.meta,positioning}
\usetikzlibrary{decorations.markings}
\usetikzlibrary{calc}
\usetikzlibrary{decorations.pathreplacing}

\hypersetup{
    colorlinks=true, 
    linktoc=all,     
    linkcolor=blue,  
}

\newcommand{\<}{\langle}
\renewcommand{\>}{\rangle}
\newcommand{\Tr}{\text{Tr}}
\newcommand{\id}{\mathbbm{1}}

\newcommand{\rdet}[1]{\det\nolimits_{#1}}
\newcommand{\imp}{P\mathcal{H}}
\newcommand{\norm}[1]{\left|\left|#1\right|\right|}
\newcommand{\wmn}{\omega_{m,n}}
\newcommand{\rp}[1]{\Tilde{#1}}

\newcommand{\ra}{\rp{\mathcal{A}}}
\newcommand{\rb}{\rp{\mathcal{B}}}
\newcommand{\rc}{\rp{\mathcal{C}}}

\newcommand{\ide}{\id_\epsilon}
\newcommand{\rce}{\rc_\epsilon}
\newcommand{\rae}{\ra_\epsilon}
\newcommand{\uabc}{U_{\mathcal{A}\mathcal{B}\mathcal{C}}}
\newcommand{\mabc}{M_{\mathcal{A}\mathcal{B}\mathcal{C}}}
\newcommand{\abc}{\mathcal{A}\mathcal{B}\mathcal{C}}
\newcommand{\bpm}{\begin{pmatrix}}
\newcommand{\epm}{\end{pmatrix}}
\newcommand{\wab}{\omega_{\alpha,\beta}}

\begin{document}

\title{R\'enyi-like entanglement probe of the chiral central charge}


\author{Julian Gass}
\author{Michael Levin}
\affiliation{Leinweber Institute for Theoretical Physics, University of Chicago, Chicago, Illinois 60637,  USA}

\begin{abstract}
We propose a ground state entanglement probe for gapped, two-dimensional quantum many-body systems that involves taking powers of reduced density matrices in a particular geometric configuration. This quantity, which we denote by $\omega_{\alpha,\beta}$, is parameterized by two positive real numbers $\alpha, \beta$, and can be seen as a ``R\'enyi-like" generalization of the modular commutator -- another entanglement probe proposed as a way to compute the chiral central charge from a bulk wave function. We obtain analytic expressions for $\omega_{\alpha,\beta}$ for gapped ground states of non-interacting fermion Hamiltonians as well as ground states of string-net models. In both cases, we find that $\omega_{\alpha,\beta}$ takes a universal value related to the chiral central charge. For integer values of $\alpha$ and $\beta$, our quantity $\omega_{\alpha,\beta}$ can be expressed as an expectation value of permutation operators acting on an appropriate replica system, providing a natural route to measuring $\omega_{\alpha,\beta}$ in numerical simulations and experiments.
\end{abstract}

\maketitle

\section{Introduction}
The chiral central charge $c_-$ is a rational-valued topological invariant that characterizes two-dimensional (2D) gapped quantum many-body systems. The standard definition of $c_-$ is in terms of the thermal Hall conductance at low temperatures. Specifically, $c_-$ is defined as the dimensionless ratio $c_- = \frac{\kappa_H(T)}{\pi^2 k_B^2 T/3 h}$, where $\kappa_H(T)$ is the thermal Hall conductance at temperature $T$, and $T$ is assumed to be much smaller than the bulk gap.~\cite{KaneFisher_1997, cappelli2002thermal, Kitaev_2006} 

Although it may not be obvious from the above definition, $c_-$ is believed to depend only on the bulk ground state.~\cite{Kitaev_2006, KapustinSpodyneiko_2019} Therefore, it should be possible in principle to compute $c_-$ directly from a ground state wave function on an infinite 2D plane. 

A concrete proposal for such a bulk formula for $c_-$ was made by Refs.~\cite{Kim_2022one, Kim_2022two}. This proposal is based on a quantity known as the ``modular commutator.'' Let $\rho$ be a density operator for a gapped ground state defined on an infinite 2D lattice and let $A, B, C$ be three non-overlapping subsets of the lattice, arranged as in Fig.~\ref{fig:ABC_geometry}. The modular commutator $J$ is defined by
\begin{align}
\label{eq:modularcommutator}
J &=i\<[\ln \rho_{AB},\ln \rho_{BC}]\>,
\end{align}
where $\rho_{R}$ is the reduced density operator on region $R$, and $\<O\>=\Tr (O\rho)$. Refs.~\cite{Kim_2022one, Kim_2022two} conjectured that in the limit of large $A, B, C$, the modular commutator approaches a universal value proportional to the chiral central charge of the ground state:
\begin{align}
\label{eq:CCC}
J =\frac{\pi}{3}c_-.
\end{align}
This conjecture is supported by several pieces of evidence including (i) a derivation of (\ref{eq:CCC})~\cite{Kim_2022one, Kim_2022two} in the case where $\rho$ satisfies the entanglement bootstrap axioms~\cite{bootstrap}, (ii) a proof in the case where $\rho$ is a gapped ground state of a non-interacting fermion Hamiltonian~\cite{Fan_2023}, and (iii) numerical results for a Laughlin-like ground state~\cite{Kim_2022two} (see also Refs.~\cite{Zou_mod_comm_CFT_2022, Fan_mod_comm_CFT_2022} for related CFT-based calculations). 

In this paper we propose a R\'enyi-like generalization of the modular commutator. Our R\'enyi modular commutator, $\wab$, is parameterized by two positive real numbers $\alpha, \beta$ and is defined as the phase of $\<\rho_{AB}^\alpha\rho_{BC}^\beta\>$:\footnote{If $\<\rho_{AB}^\alpha\rho_{BC}^\beta\> = 0$ then $\wab$ is not well-defined. However, we expect that generically $\<\rho_{AB}^\alpha\rho_{BC}^\beta\> \neq 0$.}
\begin{equation}
\label{wmn_definition}
    \wab=\frac{\<\rho_{AB}^\alpha\rho_{BC}^\beta\>}{|\<\rho_{AB}^\alpha\rho_{BC}^\beta\>|}.
\end{equation}
One can check that $\wab$ reduces to the original modular commutator (\ref{eq:modularcommutator}) in the limit $\alpha, \beta \rightarrow 0$: in particular, if we define $J_{\alpha,\beta} \equiv \frac{2i}{\alpha\beta} \ln \wab$, then
\begin{equation}
    \lim_{\alpha,\beta\rightarrow 0} J_{\alpha,\beta} =i\<[\ln\rho_{AB},\ln\rho_{BC}]\> = J.
\label{reductmodcomm}
\end{equation}

\begin{figure}
    \centering
    \includegraphics[width=0.4\columnwidth]{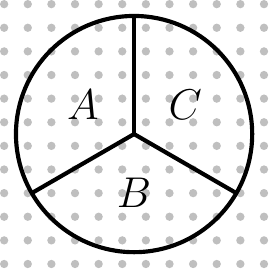}
    \caption{Geometry used to compute the modular commutator $J$ and its R\'enyi generalization $\wab$.}
    \label{fig:ABC_geometry}
\end{figure}

To investigate this R\'enyi modular commutator, we compute $\wab$ for two classes of ground states: (i) gapped ground states of non-interacting fermion Hamiltonians and (ii) ground states of string-net models~\cite{Levinwen2005, hong2009symmetrization, Hahn2020, LinLevinBurnell2021} -- exactly solvable spin models that realize a large class of (interacting) topological phases. In both cases, we find that $\wab$ takes the following universal value in the limit of large $A, B, C$: 
\begin{equation}
\label{wmn_conjecture}
\begin{gathered}
    \wab=\exp\left\{-\frac{\pi i}{12}q(\alpha,\beta) c_-\right\},\\
    q(\alpha,\beta)=\frac{\alpha}{\alpha+1}+\frac{\beta}{\beta+1}-\frac{\alpha+\beta}{\alpha+\beta+1}.
\end{gathered}
\end{equation}
More precisely, in the non-interacting fermion case, we show that $\wab = \exp[-\frac{\pi i}{24}q(\alpha,\beta) \nu(P)]$ where $\nu(P)$ is the Chern number of the corresponding spectral projector $P$ in the Majorana basis \footnote{This result holds up to corrections that are exponentially small in the size of the $A,B,C$ regions for integer $\alpha$ and $\beta$. On the other hand, for non-integer $\alpha$ and $\beta$, our error bound scales \emph{algebraically} with the size of $A, B, C$ -- see the end of Sec.~\ref{sec:computing_wmn} for more details.}. This result can be rephrased as Eq.~(\ref{wmn_conjecture}) since $c_- = \frac{\nu(P)}{2}$. Likewise, for ground states of string-net models, we show that $\wab = 1$, which is again consistent with (\ref{wmn_conjecture}), since $c_- = 0$ for these states, as reviewed below.

We also find several other pieces of evidence for Eq.~(\ref{wmn_conjecture}): first, one can check that (\ref{wmn_conjecture}) reduces to the original modular commutator conjecture (\ref{eq:CCC}) in the limit $\alpha, \beta \rightarrow 0$. In addition, $\wab$ has several general properties that are consistent with (\ref{wmn_conjecture}), including being multiplicative under stacking, odd under time reversal symmetry, and taking the trivial value $\wab = 1$ for any tripartite pure state that is supported on only three of the four regions $A, B, C, D \equiv(ABC)^c$.   

Despite this evidence, the above formula for $\wab$ (\ref{wmn_conjecture}) does not hold in complete generality: there exist (fine-tuned) examples of gapped Hamiltonians whose ground states violate (\ref{wmn_conjecture})\footnote{One example of this type is the model discussed in Ref.~\cite{Gass_2024} with a spurious modular commutator. Not surprisingly, this model also has spurious values of $\wab$.}, just as there are examples that violate (\ref{eq:CCC})~\cite{Gass_2024}, and examples that violate the topological entanglement entropy conjecture~\cite{zou_2016,Williamson_2019}. In these examples, $\wab$ takes non-universal ``spurious'' values which are sensitive to the specific choice of $A, B, C$ and which are unrelated to the chiral central charge. Moreover, we expect that there exist (fine-tuned) examples of gapped many-body states where $\<\rho_{AB}^\alpha\rho_{BC}^\beta\>=0$ so that $\wab$ is not even well-defined. 
Therefore, we fall short of claiming that $\wab$ is a true invariant. 

Instead, the most optimistic scenario is that (\ref{wmn_conjecture}) holds for generic (i.e. non-fine-tuned) gapped ground states, for a suitable definition of ``generic.'' In this sense, the R\'enyi modular commutator is on a similar footing as the original modular commutator.

The reader may wonder what we gain by having a R\'enyi generalization of the modular commutator. One advantage is that when $\alpha, \beta$ are positive integers, $\wab$ can be expressed in terms of an expectation value of a single operator acting on an appropriate replica system (see Eq.~\ref{replicaform} below). This replica representation provides a natural way to measure $\wab$ both experimentally and numerically, and also suggests a method for determining $\wab$ analytically, as in the case of the R\'enyi entanglement entropy.~\cite{islam2015measuring, Hastings2010renyi, Calabrese_2009}

This paper is organized as follows. In Sec.~\ref{sec:properties}, we discuss some of the general properties of $\wab$, as well as its connection to replica systems. In Sec.~\ref{sec:freefermions}, we present our computation of $\wab$ for gapped ground states of non-interacting fermion Hamiltonians. In Sec.~\ref{sec:string-nets}, we compute $\wab$ for string-net ground states. Finally, in Sec.~\ref{sec:discussion}, we summarize our results and discuss future directions.

\section{General properties of $\wab$}
\label{sec:properties}

\subsection{Consistency checks}
The R\'enyi modular commutator $\wab$ has several properties that are necessary for any consistent formula for $c_-$. First, $\wab$ is multiplicative under ``stacking'', i.e.~it is multiplicative under tensoring two decoupled 2D many-body systems in a bilayer geometry:
\begin{equation}
\wab(\rho\otimes\rho')=\wab(\rho)\wab(\rho').
\label{stackprop}
\end{equation}

Second, $\wab$ is odd under time reversal in the sense that
\begin{equation}
        \wab(\mathcal{T}\rho\mathcal{T}^{-1}) = \wab(\rho)^*,
\label{trprop}
\end{equation}
for any antiunitary time reversal transformation $\mathcal{T}$ of the form $\mathcal{T} = \mathcal{K} U$ where $U$ is a product of single-site unitaries and $\mathcal{K}$ denotes complex conjugation in a local basis.

Finally, $\wab = 1$ for any pure state that is supported on only three of the four regions, $A, B, C, D \equiv (ABC)^c$. That is:
\begin{align}
        \wab(|\psi_{ABC}\>) &= \wab(|\psi_{ABD}\>) = 1, \nonumber \\
        \wab(|\psi_{ACD}\> &=
        \wab(|\psi_{BCD}\>) = 1,
\label{tripartiteprop}
\end{align}
where $|\psi_{ABC}\>$ denotes a tripartite pure state that is supported only on $A, B, C$ and similarly for $|\psi_{ABD}\>$, $|\psi_{ACD}\>$, $|\psi_{BCD}\>$.

Properties (\ref{stackprop}) and (\ref{trprop}) are important consistency checks for any formula for $c_-$, since $c_-$ is additive under stacking and odd under time reversal. As for property (\ref{tripartiteprop}), this is a general requirement for any 2D topological many-body invariant since any such invariant must take the trivial value for a product state decorated with few-body tripartite states. 

We now discuss the derivation of these properties. The first two properties, (\ref{stackprop}) and (\ref{trprop}), follow straightforwardly from the definition of $\wab$. To derive (\ref{tripartiteprop}), suppose $\rho$ is a tripartite pure state, $\rho = |\psi_{ABC}\>\<\psi_{ABC}|$. Then $\rho_{AB}|\psi_{ABC}\> = \rho_C|\psi_{ABC}\>$ and $\rho_{BC}|\psi_{ABC}\> = \rho_A|\psi_{ABC}\>$, so 
\begin{align}
\<\rho_{AB}^\alpha\rho_{BC}^\beta\>=\<\rho_{C}^\alpha\rho_{A}^\beta\>.
\label{tripartpos}
\end{align}
Next, notice that the operator $\rho_{C}^\alpha\rho_{A}^\beta$ on the right hand side is positive semi-definite since it is a product of two commuting, positive semi-definite operators. Furthermore, one can check that $|\psi_{ABC}\>$ has no support on the null space of $\rho_{C}^\alpha\rho_{A}^\beta$. It follows that the right hand side of (\ref{tripartpos}) is  strictly positive so that $\wab = 1$. A similar argument applies for the other tripartite states, $|\psi_{ABD}\>$, $|\psi_{ACD}\>$ , and $|\psi_{BCD}\>$: in each case we can rewrite $\<\rho_{AB}^\alpha\rho_{BC}^\beta\>$ as an expectation value of the form $\<\rho_{R}^\alpha\rho_{R'}^\beta\>$ where $R, R'$ are disjoint regions.

\subsection{Replica representation}
We now show that when $\alpha=m,\ \beta=n$, where $m, n$ are positive integers, $\wmn$ can be written in terms of the expectation value of a permutation operator acting in an appropriate replica system. Specifically, consider a replica system built out of $m+n+1$ copies of our original state $\rho$. Then we can write
\begin{align}
\<\rho_{AB}^m\rho_{BC}^n\> = \Tr(\pi_A^{\sigma_a} \pi_B^{\sigma_b} \pi_C^{\sigma_c} \rho^ {\otimes (m+n+1)})  
\label{replicaform}
\end{align}
where $\pi^{\sigma_a}_A$, $\pi^{\sigma_b}_B$, $\pi^{\sigma_c}_C$ are replica permutation operators that permute the $m+n+1$ replicas of the lattice sites in regions $A, B, C$ according to the following (cyclic) permutations:
\begin{align}
  \sigma_a &= \bpm 1 & n+2 & n+3 & \cdots & m+n+1\epm  \nonumber \\
  \sigma_b &= \bpm 1 & 2 & \cdots & m + n + 1 \epm \nonumber \\
  \sigma_c &= \bpm 1 & 2 & \cdots & n + 1 \epm
\end{align}
Here, $\sigma_a$ and $\sigma_c$ are cyclic permutations of order $m+1$ and $n+1$ respectively, while $\sigma_b$ is a cyclic permutation of order $m + n + 1$. The precise definition of the replica permutation operators $\pi^\sigma_R$ is given by 
\begin{align}
    \pi^\sigma_R= \prod_{i \in R} \pi^\sigma_i
\end{align}
where the product runs over lattice sites $i \in R$ and where each $\pi^\sigma_i$ is a single-site permutation operator that permutes the $m+n+1$ replica sites associated with a single lattice site $i$. For bosonic systems, $\pi^\sigma_i$ is defined by
\begin{align}
\pi^\sigma | q_1 \cdots q_{m+n+1} \> = | q_{\sigma^{-1}}(1) \cdots q_{\sigma^{-1}}(m+n+1) \>
\end{align}
where $|q_1 \cdots q_{m+n+1}\>$ denotes a state in which the first replica lattice site is in some basis state $|q_1\>$, the second is in $|q_2\>$ and so on. (Here, we drop the $i$ subscript on $\pi^\sigma_i$ for notational simplicity, since the discussion that follows is completely focused on a single lattice site $i$). For fermionic systems, $\pi^\sigma$ can be defined by its action on the fermion annihilation operators $c_{1},...,c_{m+n+1}$, together with its action on the Fock space vacuum $|0\>$. Specifically, $\pi^\sigma$ leaves the replica vacuum invariant, i.e.~$\pi^\sigma |0\> = |0\>$, while it acts on the fermion operators as follows:
\begin{align}
    \pi^\sigma c_j (\pi^\sigma)^{-1} = \pm c_{\sigma(j)}.
\end{align}
The $\pm$ signs must be chosen carefully so that expectation values involving $\pi^\sigma$ on replica systems coincide with expectation values of reduced density operators. In particular, a consistent sign choice for the $\sigma$'s appearing in Eq.~\eqref{replicaform} is, with $\sigma=(i_1 i_2 \cdots i_k)$,
\begin{equation}
    \pi^\sigma c_{j} (\pi^\sigma)^{-1} =
    \begin{cases}
        -c_{\sigma(j)} & j\neq i_k\\
        c_{\sigma(j)} & j= i_k
    \end{cases}.
\end{equation}

We should mention that, while the above replica representation (\ref{replicaform}) is a powerful tool, we will not use it in this paper. Instead, our computations follow directly from the density matrix representation of $\wab$ (\ref{wmn_conjecture}), which applies to any real numbers $\alpha, \beta > 0$.

\section{Computation for non-interacting fermions}
\label{sec:freefermions}
In this section we prove Eq.~\eqref{wmn_conjecture} for gapped ground states of non-interacting fermion Hamiltonians. For clarity, we first present our derivation in the special case of charge conserving Hamiltonians (i.e.~complex fermion models). We then derive Eq.~\eqref{wmn_conjecture} in the more general case of Majorana fermion models.

\subsection{Quasidiagonality and the real-space Chern number}
\label{sec:qd_nuP}
We begin by explaining our basic setup and reviewing some useful concepts. Our main objects of interest are charge conserving fermionic Hamiltonians on an infinite two-dimensional lattice. These Hamiltonians take the form 
\begin{equation}
\label{complexH}
    \hat{H}=\sum_{j,k}h_{jk}c_j^\dagger c_k\qquad \{c_j,c_k^\dagger\}=\delta_{jk},
\end{equation}
where the single-particle Hamiltonian $h$ is an infinite-dimensional, Hermitian matrix, and we further assume finite-range hopping:
\begin{equation*}
    h_{jk}=0 \ \ \text{for} \ \ |j-k|\geq r_0,
\end{equation*}
for some finite $r_0$ \footnote{In fact, our results hold for a more general class of single-particle Hamiltonians, which only have exponential off-diagonal decay: $|h_{jk}|\leq be^{-\gamma |j-k|}$ for some $b,\gamma>0$.}. We also assume that the spectrum of $h$ is such that $\hat{H}$ has a finite energy gap (we need not assume any translational symmetry). 

The many-body ground state $|\Psi\>$ is obtained by filling all of the normal modes corresponding to the eigenvectors of $h$ with negative eigenvalues. This state is therefore fully determined by the ``spectral projector'' $P=\frac{1}{2}(\id-\operatorname{sgn}(h))$, an operator on the single-particle Hilbert space that projects onto these negative-energy modes. In particular, the two-point correlation function is given by the matrix elements of $P$: $\<\Psi|c_j^\dagger c_k|\Psi\>=P_{kj}$. Since $\hat{H}$ is gapped, the magnitude of $P_{jk}$ decays exponentially in the distance between $j$ and $k$ \cite{Hastings2006}:
\begin{equation}
\label{quasidiagonal_p}
    |P_{jk}|\leq b e^{-\gamma |j-k|},
\end{equation}
for some constants $b,\gamma>0$. We call operators with such decay properties ``quasidiagonal'' \footnote{This is a more restrictive definition of quasidiagonal than the one used in Ref.~\cite{Kitaev_2006}, which only assumes $|P_{jk}|\leq b |j-k|^{-\delta}$ for some $\delta > D$, the lattice dimension.}. In the next section, we will use the following properties of quasidiagonal operators:
\begin{enumerate}
    \item[(i)] If $X$ and $Y$ are quasidiagonal, their sum $X+Y$ and product $XY$ are as well.
    \item[(ii)] Let $X$ be quasidiagonal with spectrum $\sigma(X)$ and $f$ a holomorphic function on a bounded open region $D$ containing $\sigma(X)$. Then $f(X)$ is quasidiagonal. Moreover, $f(X)$ depends locally on $X$: for any quasidiagonal operator $V_R$ supported in region $R$ such that $\sigma(X+V_R)\subset D$, the matrix elements of $f(X+V_R)-f(X)$ decay exponentially outside of region $R$.
\end{enumerate}
See Appendix \ref{app:quasidiagonal} for derivations of these properties.

An important result that follows from the quasidiagonal property of $P$ is that it allows one to define a ``real-space Chern number" $\nu(P)$. For a tripartition of the lattice as shown in Fig.~\ref{fig:ABC_infinite}, the real-space Chern number is defined as~\cite{Kitaev_2006}
\begin{align}
\label{nu_kitaev}
    \nu(P)&=12\pi i \Tr\left(PAPBPCP-PCPBPAP\right)\nonumber\\
    &=4\pi i \Tr[PAP,PBP]\ ,
\end{align}
where $A,B,C$ denote spatial projections onto the corresponding regions. The quantity (\ref{nu_kitaev}) is a topological invariant, as moving a site from one region to another leaves \eqref{nu_kitaev} unchanged, and one can also show that it gives integer values. In the translationally invariant case, $\nu(P)$ coincides with the TKNN invariant~\cite{TKNN} defined as an integral over momentum space. For our purposes, we will need a generalization of Eq.~\eqref{nu_kitaev} that was introduced in Ref.~\cite{Fan_2023}. There, the authors showed that for positive integers $m$ and $n$,
\begin{align}
\label{nu_generalized}
    \Tr[(PAP)^m,(PBP)^n]&=\frac{\nu(P)}{2\pi i}\frac{m!\ n!}{(m+n)!}\nonumber\\
    &=\frac{i\nu(P)}{2\pi}\int_{0}^{1}t^m d(1-t)^n.
\end{align}
Note that this equation and the second equality in Eq.~\eqref{nu_kitaev} are only valid in the case where the size of $ABC$ is infinite, as otherwise the trace of the commutator is zero. 

\begin{figure}
    \centering
    \includegraphics[width=0.4\columnwidth]{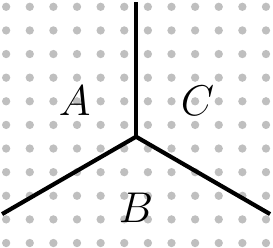}
    \caption{Tripartition of the lattice used in the definition of the real-space Chern number \eqref{nu_kitaev}.}
    \label{fig:ABC_infinite}
\end{figure}

\subsection{Computing \texorpdfstring{$\wab$}{Lg}}
\label{sec:computing_wmn}

We now compute $\wab$ for the ground state of the charge conserving Hamiltonian (\ref{complexH}). For simplicity, we will assume $\alpha$ and $\beta$ are positive integers in this computation. This case is simple because we can show that all error terms are exponentially small in the system size. Afterwards, we will comment on how the calculation can be generalized to all real $\alpha,\beta>0$.

\subsubsection*{Step 1: Single particle formula for $\wab$}

In the first step of our calculation, we find a single particle formula for $\wab$. More specifically, we express $\wab$ in terms of the spectral projector $P =\frac{1}{2}(\id-\operatorname{sgn}(h))$. 

To find the desired formula, consider the expectation value $\<\rho_{AB}^\alpha\rho_{BC}^\beta\>$. Using standard results, one can show that $\<\rho_{AB}^\alpha\rho_{BC}^\beta\>$ is given by the following determinant:
\begin{equation}
\label{mdef}
\begin{gathered}
    \<\rho_{AB}^\alpha\rho_{BC}^\beta\>=\det M,\\
    M=(\id-P)+P(P_{AB}+P_{CD})^\alpha(P_{BC}+P_{AD})^\beta P.
\end{gathered}
\end{equation}
Here $P_R$ denotes the restriction $P_R = RPR$ where $R$ is the spatial projection onto region $R$. For a detailed derivation of (\ref{mdef}), see Appendix \ref{app:m_derivation}.

We obtain a formula for $\wab$ by taking the polar decomposition $M=U|M|$, where $|M|=\sqrt{M^\dagger M}$. This allows us to write Eq.~\eqref{wmn_definition} compactly as a single determinant of the unitary part:
\begin{equation}
\label{wmn_compact}
    \wab= \frac{\det M}{\det|M|}=\det U.
\end{equation}
We show in Appendix \ref{app:m_invertibility} that $M$ is always invertible, and therefore the determinant in Eq.~\eqref{mdef} is nonzero, for \emph{any} spectral projector $P$. It follows that $\det U$ is always a well-defined $U(1)$ phase in the case of non-interacting fermion systems\footnote{Here we are using the notion of the ``Fredholm determinant'', which is a generalization of the regular determinant for finite matrices to the infinite dimensional case. It is defined for any operator of the form $\id+T$ where $T$ has a well-defined trace. The matrix $M$ is indeed of this form, as $M-\id$ decays exponentially to zero away from the $ABC$ region.}. Our task is now to evaluate $\det U$.

\subsubsection*{Step 2: Truncating to the triple points}

Next, we claim that $\det U$ only receives contributions from the neighborhood of the four triple points of the $ABCD$ partition. Explicitly, we define four large disjoint regions of linear size $l_\Pi$ that surround each of the triple points and label them as $\Pi_i$ where $i$ goes from 1 to 4, as in Fig.~\ref{fig:ABCD}. We claim that
\begin{multline}
\label{u_truncation}
    \det U=\left(\rdet{\Pi_1} U\right)\left(\rdet{\Pi_2} U\right)\left(\rdet{\Pi_3} U\right)\left(\rdet{\Pi_4} U\right)
\end{multline}
where $\det_{\Pi_i}$ indicates that the determinant is restricted to the indices corresponding to sites in $\Pi_i$ and where equality holds up to corrections that are exponentially small in $l_\Pi$ \footnote{More precisely, Eq.~\eqref{u_truncation} holds up to errors of order $O(e^{-\gamma' l_\Pi})$, where $\gamma'$ is the decay coefficient of the off-diagonal elements of $U$. Using techniques from Appendix \ref{app:quasidiagonal}, the value of $\gamma'$ can be bounded in terms of the original $\gamma$ appearing in Eq.~\eqref{quasidiagonal_p} and the powers $\alpha, \beta$. In particular, one finds that $\gamma'$ is equal to the decay coefficient of $|M|^{-1}$, which can be lower bounded in terms of $\norm{M^{-1}}^{-1}$. Explicitly, we have
\begin{equation*}
    \gamma' \geq \frac{\gamma}{\norm{M^{-1}}}\geq \frac{\gamma}{2^{2(\alpha+\beta+1)}},
\end{equation*}
where we utilize the bound on $\norm{M^{-1}}$ derived in Appendix \ref{app:m_invertibility}.}.

\begin{figure}
    \centering
    \includegraphics[width=0.5\linewidth]{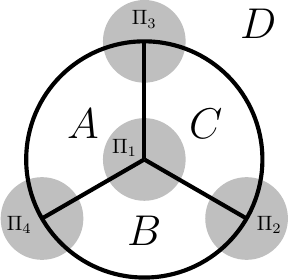}
    \caption{Depiction of the spatial regions $\Pi_i$ used in Eq.~\eqref{u_truncation}, which surround each of the four triple points in the $ABCD$ partition. Outside of these regions, $U$ is approximately equal to the identity.}
    \label{fig:ABCD}
\end{figure}

Equation \eqref{u_truncation} is a consequence of the following two properties of $M$ and $U$:
\begin{enumerate}
    \item\label{prop1} $U$ is quasidiagonal and depends locally on $M$ in the same sense as in property (ii) of the previous section.
    \item\label{prop2} $M$ is ``locally positive definite'' away from the triple points: let $R$ be a subregion that intersects at most two of the $ABCD$ regions, and let $d_{R}$ be the distance between it and the other two regions.
    Then there exists a quasidiagonal positive definite matrix $M_{pd}$ such that 
    \begin{equation}
    \label{locallyPD}
        RMR=RM_{pd}R+O(e^{-\gamma d_{R}}),
    \end{equation}
    where $\gamma$ is the decay coefficient of the spectral projector given in Eq.~\eqref{quasidiagonal_p} and $R$ denotes the projection onto region $R$.
\end{enumerate}
To see why these properties imply Eq.~\eqref{u_truncation}, note that
property \ref{prop2} along with local dependence from property \ref{prop1} implies that, up to errors exponentially small in $d_R$,
\begin{equation}
    RUR=R.
\end{equation}
This follows because the unitary part of the polar decomposition of $M_{pd}$ is the identity. We therefore see that $U$ rapidly approaches the identity outside of the $\Pi_i$ regions, so up to errors exponentially small in the size of the $\Pi_i$,
\begin{equation}
    \det U=\rdet{(\Pi_1\cup\Pi_2\cup\Pi_3\cup\Pi_4)}U. 
\end{equation}
Quasidiagonality of $U$ then further implies that $U$ block-diagonalizes on the $\Pi_i$ subspaces, and Eq.~\eqref{u_truncation} follows. 

We now prove properties \ref{prop1} and \ref{prop2}. For property \ref{prop1}, note that $(P_{AB}+P_{CD})^\alpha$ and $(P_{BC}+P_{AD})^\beta$ are quasidiagonal by property (i) of the previous section, so $M$ and $M^\dagger$ must be as well. (Remark: this step is where our assumption that $\alpha$ and $\beta$ are integers comes into play: if $\alpha, 
\beta > 0$ are not integers, then we cannot use property (i) here, and our argument that $M$ is quasidiagonal breaks down. Instead, we can only show that the matrix elements $M_{jk}$ decay as a \emph{power-law} in $|j-k|$ -- see comment at the end of this section.) Further, we show in Appendix \ref{app:m_invertibility} that $M$ is invertible with spectrum bounded away from zero by a constant that depends only on $\alpha$ and $\beta$. The spectrum of $M^\dagger M$ is therefore similarly bounded away from zero and positive, so property (ii) with $f(x)=x^{-1/2}$ implies that $|M|^{-1}$ is quasidiagonal and depends locally on $M$. Property \ref{prop1} then follows from (i) and the fact that $U=M|M|^{-1}$.

To derive property \ref{prop2}, let $k$ be a site located inside the $AB$ region. Then for any other site $j$, Eq.~\eqref{quasidiagonal_p} implies
\begin{equation*}
    (P_{AB}+P_{CD})_{jk}=P_{jk}+O\left(e^{-\gamma \operatorname{dist}(k,CD)}\right),
\end{equation*}
and
\begin{align*}
    (P_{BC}+P_{AD})_{jk}&=(P_{B}+P_{A})_{jk}+O\left(e^{-\gamma \operatorname{dist}(k,CD)}\right)\nonumber\\
    &=(P_{A^c}+P_{A})_{jk}+O\left(e^{-\gamma \operatorname{dist}(k,CD)}\right).
\end{align*}
We therefore see that for $k$ deep enough in the interior of $AB$, the matrix element $M_{jk}$ satisfies:
\begin{multline}
    M_{jk}=\left[(\id-P)+P(P_{A^c}+P_{A})^\beta P\right]_{jk}\\
    +O\left(e^{-\gamma\operatorname{dist}(k,CD)}\right).
\end{multline}
The operator appearing on the right-hand side is positive definite, and one can check that a similar form holds for $M_{jk}$ in all other areas where $k$ is chosen to be far from two of the $ABCD$ regions. 

\subsubsection*{Step 3: Reducing to a tripartition of the infinite plane}
We now claim that each of the four truncated determinants in Eq.~\eqref{u_truncation} can be written as the determinant of a unitary operator $\uabc$ given by
\begin{equation}
\label{udef_ABC}
\begin{gathered}
    \uabc=\mabc|\mabc|^{-1},\\
    \mabc=(\id-P)+P(P_{\mathcal{A}\mathcal{B}}+P_{\mathcal{C}})^\alpha(P_{\mathcal{B}\mathcal{C}}+P_{\mathcal{A}})^\beta P,
\end{gathered}
\end{equation}
where $\abc$ is a tripartition of the plane that is topologically equivalent to Fig~\ref{fig:ABC_infinite}. Moreover, $\det\uabc$ gives the same value for all topologically equivalent choices of $\abc$, so that 
\begin{equation}
\label{uabc_claim}
    \det U=\left(\det\uabc\right)^4.
\end{equation}
(Here, and in what follows, we neglect exponentially small error terms like the one in Eq.~\eqref{u_truncation}.)

\begin{figure}
     \subfloat[\label{fig:det1}]{\includegraphics[width=4cm]{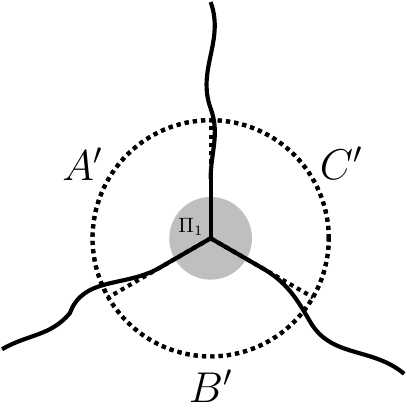}}
     \hfill
     \subfloat[\label{fig:det2}]{\includegraphics[width=4cm]{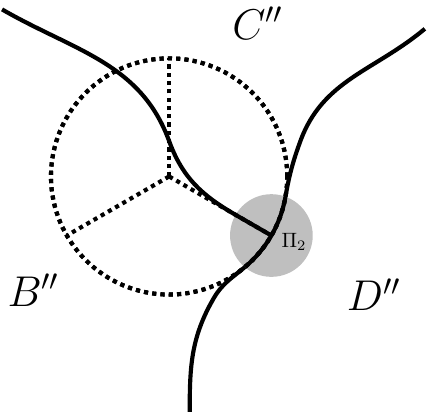}}
     \caption{Tripartitions of the plane used to compute (a) $\rdet{\Pi_1}U$ and (b) $\rdet{\Pi_2}U$. The dotted lines show the boundaries of the original $ABC$ region.}
\end{figure}
We now derive these claims. We start with $\det_{\Pi_1} U$. This determinant is invariant under arbitrary deformations to the $ABCD$ partition away from $\Pi_1$. In particular, we can consider new regions $A',B',C'$ like those shown in Fig.~\ref{fig:det1}, which form a tripartition of the plane and satisfy
\begin{equation*}
    A'|_{\Pi_1}=A|_{\Pi_1},\quad B'|_{\Pi_1}=B|_{\Pi_1},\quad C'|_{\Pi_1}=C|_{\Pi_1}.
\end{equation*}
We then define a new operator $M'$ by taking $\{A,B,C,D\}\rightarrow \{A',B',C',\emptyset\}$ in Eq.~\eqref{mdef}. Likewise, we define a new unitary $U'$ by $U' = M' |M'|^{-1}$. Quasidiagonality implies that $\Pi_1M\Pi_1=\Pi_1M'\Pi_1$ provided that the boundaries of $A'B'C'$ deviate from $ABC$ slowly enough outside of $\Pi_1$ \footnote{Concretely, we can choose $A',B',C'$ to agree with $ABC$ on a larger set $\Pi_1'$ that is defined by thickening $\Pi_1$ by some length $l'$. We then have $\Pi_1M\Pi_1=\Pi_1M'\Pi_1+O(e^{-\gamma l'})$ from quasidiagonality}. From property \ref{prop1} we therefore have
\begin{equation}
    \rdet{\Pi_1}U = \rdet{\Pi_1}U'.
\end{equation}
At the same time, if we apply the truncation result (\ref{u_truncation}) to $U'$, then since there is only one triple point in this configuration (Fig.~\ref{fig:det1}), we deduce that
\begin{equation}
    \det U'=\rdet{\Pi_1}U'.
\end{equation}
Combining these two results, we find that $\rdet{\Pi_1} U = \det U'$, where $U'$ is of the form given in Eq.~\eqref{udef_ABC}. This explains the first factor of $\det\uabc$ in (\ref{uabc_claim}).

Now consider the second determinant in \eqref{u_truncation}. Following the same arguments, we find that $\rdet{\Pi_2} U=\det U''$, where $U''$ is the unitary part of the matrix
\begin{equation*}
    M''=(\id-P)+P(P_{B''}+P_{C''D''})^\alpha(P_{B''C''}+P_{D''})^\beta P,
\end{equation*}
and $D''C''B''$ is a tripartition like that shown in Fig.~\ref{fig:det2}. Relabeling $\{D'',C'',B''\}$ as $\{\mathcal{A},\mathcal{B},\mathcal{C}\}$, we find that $U''$ is also of the form given in Eq.~\eqref{udef_ABC}. 

The remaining two determinants in \eqref{u_truncation} follow a similar analysis. Therefore, to finish the derivation of Eq.~\eqref{uabc_claim}, all that remains is to show that the four determinants coming from the different triple points are all the same. Equivalently, we need to show that $\det\uabc$ is a topological invariant. We show this by directly computing $\det\uabc$. We will see that $\det\uabc$ is independent of the choice of $\mathcal{A}, \mathcal{B}, \mathcal{C}$ and is proportional to the real-space Chern number.

\subsubsection*{Step 4: Evaluating $\det\uabc$}
At this point, all that remains is to evaluate $\det\uabc$ since $\wab = (\det \uabc)^4$ according to \eqref{wmn_compact} and \eqref{uabc_claim}. We now sketch this calculation -- see Appendix \ref{app:determinant_derviation} for more details.

Our strategy is to express $\mabc$ as a product of three exponentials,
\begin{equation}
\label{mabc_decomposition}
    \mabc=e^{O_1}e^{O_2}e^{O_3},
\end{equation}
where the $O_i$ operators have two nice properties: (1) they are \emph{Hermitian} and (2) the commutators $[O_i,O_j]$ have well-defined traces. These properties allow us to express the unitary part $\uabc$ in terms of commutators of the $O_i$ via the Baker-Campbell-Hausdorff expansion, from which we can compute the determinant using Eq.~\eqref{nu_generalized}.

We begin by expanding out the second term in $\mabc$ as
\begin{align}
P(P_{\mathcal{A}\mathcal{B}} +P_{\mathcal{C}})^\alpha&(P_{\mathcal{B}\mathcal{C}}+P_{\mathcal{A}})^\beta P = \nonumber \\
&P(P_{\mathcal{A}\mathcal{B}}^\alpha P_{\mathcal{B}\mathcal{C}}^\beta + 
P_{\mathcal{C}}^\alpha P_{\mathcal{B}\mathcal{C}}^\beta + P_{\mathcal{A}\mathcal{B}}^\alpha P_{\mathcal{A}}^\beta )P
\label{m2ndterm}
\end{align}
Next we use the following operator identities which hold for any exponent $x > 0$:
\begin{equation*}
P (RPR)^x = (PRP)^x R, \quad (RPR)^x P = R (PRP)^x.
\end{equation*}
Applying these identities to (\ref{m2ndterm}), we rewrite these terms as
\begin{align}
&(P-\rc)^\alpha(P-\ra-\rc)(P-\ra)^\beta \nonumber\\
&+\rc^{\alpha+1}(P-\ra)^\beta + (P-\rc)^\alpha \ra^{\beta+1}
\end{align}
where we are using the abbreviation $\rp{R}=PRP$ where $R$ is a spatial projector. Then, including the first term in $\mabc$, (namely $\id - P$), we obtain
\begin{align}
    \mabc&=(\id-\rc)^\alpha(\id-\ra-\rc)(\id-\ra)^\beta \nonumber\\
    &\quad+\rc^{\alpha+1}(\id-\ra)^\beta+(\id-\rc)^\alpha\ra^{\beta+1}.
\end{align}
We then obtain the factorization given in Eq.~\eqref{mabc_decomposition} by letting
\begin{align}
\label{oi_def}
    O_1&=\ln\left(\id-\rc\right)^\alpha\nonumber\\
    O_2&=\ln\left(\id-\ra-\rc+\frac{\rc^{\alpha+1}}{(\id-\rc)^{\alpha}}+\frac{\ra^{\beta+1}}{(\id-\ra)^{\beta}}\right)\\
    O_3&=\ln\left(\id-\ra\right)^\beta\nonumber.
\end{align}

We note that the operators $O_1, O_2, O_3$ are generally unbounded, since $\id-\ra$ and $\id-\rc$ can have arbitrarily small or even vanishing eigenvalues. In the following derivation we will ignore this issue, effectively interpreting the $O_i$ as formal power series expansions. For a more careful treatment, including a proper regularization of the $O_i$, see Appendix \ref{app:determinant_derviation}.

We now use the Baker-Campbell-Hausdorff expansion to write
\begin{equation}    \uabc=e^{O_1}e^{O_2}e^{O_3}\left|e^{O_1}e^{O_2}e^{O_3}\right|^{-1}\equiv e^{iK},
\end{equation}
for a Hermitian $K$, which can be expressed as a formal sum of commutators between the $O_i$. One can show that only the lowest order commutators in $K$ have a nonvanishing trace (at least if we regularize the $O_i$'s as in Appendix \ref{app:determinant_derviation}) so $\det\uabc$ reduces to the following expression:
\begin{align}
\label{uabc_traces}
    \det\uabc&=\exp\Tr(iK)\nonumber\\
    &=\exp\frac{\Tr[O_1,O_2]+\Tr[O_2,O_3]+\Tr[O_1,O_3]}{2}.
\end{align}

To compute these traces, we note that Eq.~\eqref{nu_generalized} implies, for functions $f$ and $g$ that are analytic at $0$, 
\begin{equation}
\label{integralform}
    \Tr[f(\rc), g(\ra)]=\frac{i\nu(P)}{2\pi}\int_{0}^{1} [f(t)-f(0)] \partial_{t} g(1-t) dt.
\end{equation}
We can immediately use this to calculate the third trace in Eq.~\eqref{uabc_traces}:
\begin{align}
    \Tr[O_1, O_3]&=\Tr\left[\ln \left(\id-\rc\right)^\alpha, \ln \left(\id-\ra\right)^\beta\right] \nonumber\\
    & =\alpha\beta\frac{i\nu(P)}{2\pi}\int_{0}^{1} \frac{\ln (1-t)}{t} d t \nonumber\\
    &=-\alpha\beta\frac{\pi i\nu(P)}{12}.
\end{align}
The other two traces are more involved, and are computed in Appendix~\ref{app:determinant_derviation}. The main idea is that $\Tr[O_1,O_2]$ and $\Tr[O_2,O_3]$ can each be broken up into components for which we can use Eq.~\eqref{integralform}. The result is 
\begin{equation}
    \Tr[O_1, O_2]
    =-\frac{\pi i\nu(P)}{24}\alpha\left(\frac{1}{\alpha+1}-\frac{1}{\alpha+\beta+1}-\beta\right)
\end{equation}
and 
\begin{equation}
    \Tr[O_2, O_3]
    =-\frac{\pi i\nu(P)}{24}\beta\left(\frac{1}{\beta+1}-\frac{1}{\alpha+\beta+1}-\alpha\right).
\end{equation}
Inserting these values into Eq.~\eqref{uabc_traces}, we obtain
\begin{equation}
    \det \uabc=\exp\left\{-\frac{\pi i}{48}q(\alpha,\beta)\nu(P)\right\}.
\end{equation}
Finally, applying equations \eqref{wmn_compact} and \eqref{uabc_claim}, we conclude that $\wab$ takes the value given in Eq.~\eqref{wmn_conjecture} since $c_-=\nu(P)$ for complex fermions.

\subsubsection*{Extending to real \texorpdfstring{$\alpha,\beta>0$}{Lg}}

We now sketch how to generalize the above derivation to the case where $\alpha, \beta$ are arbitrary positive real numbers. As we mentioned earlier, the main complication is that in the general case, $M$ need not be quasidiagonal in the sense of (\ref{quasidiagonal_p}) -- that is, the matrix elements $M_{jk}$ need not decay exponentially with $|j-k|$. Instead, one can prove a \emph{power-law} bound: one can show that $|M_{jk}| \leq c (1+|j-k|)^{-\delta}$ for some constant $c$ and some $\delta > 1$. In addition, one can show that $|M_{jk} -\id_{jk}|$ decays exponentially with the distance from $j, k$ to the boundaries of regions $A, B, C$. These two properties can be established with similar tools to the ones used in this paper, e.g.~the holomorphic functional calculus approach of Appendix~\ref{app:quasidiagonal}.

We believe that Steps 2-4 can still be justified using these weaker bounds on $M_{jk}$. The point is that we can weaken our definition of quasidiagonality to include any operator $A$ such that $|A_{jk}| \leq c (1+|j-k|)^{-\delta}$ for some $\delta > D$, the dimension of the lattice. The above bounds on $M_{jk}$ mean that $M$ is effectively quasidiagonal in this weaker sense. Then, using properties (i) and (ii), which still hold for this weaker notion of quasidiagonality \cite{jaffard_1990}, one can argue that the unitary $U =  M |M|^{-1}$ is also quasidiagonal in this weaker sense. Following the same logic, the truncation in Eq.~\eqref{u_truncation} and the reduction to the tripartition in Eq.~\eqref{uabc_claim} should still hold, but with errors that decay with a power-law in $l_\Pi$ rather than exponentially in $l_\Pi$. In this way, one can argue that the calculation is still valid for general $\alpha, \beta > 0$, but with \emph{power-law} bounds on error terms.

\subsection{Majorana fermions}
\label{sec:majorana}
We now derive Eq.~(\ref{wmn_conjecture}) for gapped Majorana fermion models. Specifically, we consider gapped quadratic fermion Hamiltonians on the infinite 2D lattice that take the form 
\begin{equation}
\label{hdef_majorana}
    \hat{H}=\frac{i}{4}\sum_{j,k}h_{jk}\chi_j \chi_k\qquad \{\chi_j,\chi_k\}=2\delta_{jk},
\end{equation}
where $h$ is a real skew-symmetric matrix with finite-range hopping. 

As in the complex fermion case, the first step of the calculation is to find an expression for $\<\rho_{AB}^\alpha\rho_{BC}^\beta\>$ in terms of the spectral projector on the single-particle Hilbert space. We outline how this is done in Appendix \ref{app:majorana}: the main difference with the complex case is that expectation values of Gaussian operators are more naturally expressed in terms of Pfaffians, rather than determinants. For this reason, it is more straightforward to instead compute \emph{squares} of expectation values, which can then be expressed in the more familiar determinant formalism. In particular, we show in Appendix \ref{app:majorana} that for the Majorana fermion model (\ref{hdef_majorana}),
\begin{equation}
\label{mdef_majorana}
\begin{gathered}
    \<\rho_{AB}^\alpha\rho_{BC}^\beta\>^2=\det M,\\
    M=(\id-P)+P(P_{AB}+P_{CD})^\alpha(P_{BC}+P_{AD})^\beta P.
\end{gathered}
\end{equation}
where $P$ is the spectral projector $P=\frac{1}{2}(\id-\operatorname{sgn}(h))$.

Next, notice that the above expression for $M$ (\ref{mdef_majorana}) has exactly the same form as that appearing in Eq.~\eqref{mdef}, and the spectral projector satisfies the same quasidiagonal properties in the Majorana basis as it does in the complex basis. We can therefore follow the same logic as in the complex case to derive
\begin{equation}
    \wab^2=\exp\left\{-\frac{\pi i}{12}q(\alpha,\beta)\nu(P)\right\}
\end{equation}
for Majorana fermions. Taking the square root introduces a $e^{\pi i}$ phase ambiguity, which is fixed by requiring that $\omega_{\alpha,0}=\omega_{0,\beta}=1$, leaving us with 
\begin{equation}
    \wab=\exp\left\{-\frac{\pi i}{24}q(\alpha,\beta)\nu(P)\right\}.
\end{equation}
The extra factor of $\frac{1}{2}$ in the exponent reflects the fact that each Majorana mode constitutes only half of a complex fermion mode. Correspondingly, we have $c_-=\frac{\nu(P)}{2}$ when $P$ is defined the Majorana basis, and we conclude that the value of $\wab$ agrees with \eqref{wmn_conjecture} in the Majorana case as well.

\section{Computation for string-net models}
\label{sec:string-nets}

In this section we show that $\wab = 1$ for string-net ground states. This result is consistent with Eq.~(\ref{wmn_conjecture}) since all string-net ground states have $c_- = 0$, as we review below.

\subsection{Review of string-net models}

We begin with a brief review of string-net models.~\cite{Levinwen2005, hong2009symmetrization, Hahn2020, LinLevinBurnell2021} String-net models are exactly solvable 2D lattice models that realize a large class of topological phases with nontrivial anyon excitations. Specifically, these models are believed to realize every topological phase whose boundary with the vacuum can be gapped.~\cite{KitaevKong} 

The input data necessary to build a string-net model is a set of string types, $a \in \mathcal{C}$, together with a corresponding set of quantum dimensions $\{d_a: a \in \mathcal{C}\}$, branching rules, and $F$-symbols obeying certain consistency conditions. The output of the string-net construction is a lattice spin model with nontrivial anyon excitations. 

\begin{figure}
    \centering    
    \includegraphics[width=0.6\columnwidth]{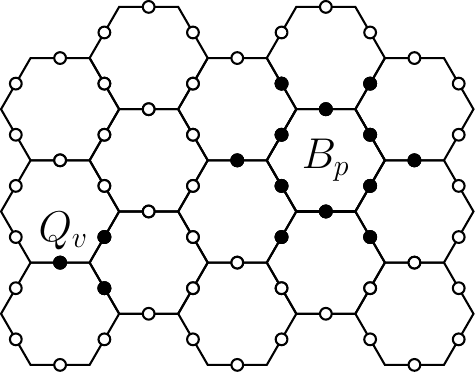}
    \caption{The two terms in the string-net Hamiltonian (\ref{snet-Ham}). The regions of support of the $Q_v$ and $B_p$ operators are indicated by filled circles.}
    \label{fig:snetlattice}
\end{figure}

Concretely, string-net models are built out of finite-dimensional spins (i.e.~qudits) living on the links of the honeycomb lattice (Fig.~\ref{fig:snetlattice}). Each spin can be in a finite number of basis states $|a\>$ -- one for each string type $a \in \mathcal{C}$. The Hamiltonian takes the form
\begin{equation}
H = - \sum_v Q_v - \sum_p B_p
\label{snet-Ham}
\end{equation}
where $v$ and $p$ label the vertices and plaquettes of the honeycomb lattice and where $Q_v$ and $B_p$ are commuting Hermitian projection operators:
\begin{align}
    [Q_v, Q_{v'}] = [B_p, B_{p'}] = [Q_v, B_p] = 0
\end{align}
More specifically, $Q_v$ is a projection operator that acts on the three links adjacent to vertex $v$ and is diagonal in the string basis $\{|a\>\}$. The $B_p$ operator is a projector that acts on the six links along the boundary of plaquette $p$ as well as the six links that touch $p$, which we refer to as the ``legs'' of $p$ (Fig.~\ref{fig:snetlattice}). This operator acts diagonally in the string basis $\{|a\>\}$ on the six legs of $p$ (though it does not act diagonally on the six links along the boundary of $p$). The specific form of $Q_v, B_p$ can be found in Refs.~\cite{Levinwen2005, hong2009symmetrization, Hahn2020, LinLevinBurnell2021}, but is not necessary for our calculation.

An important property of the Hamiltonian (\ref{snet-Ham}) is that it has a unique gapped ground state $|\Phi\>$ in an infinite plane geometry. This state $|\Phi\>$ is the unique state that obeys 
\begin{align}
Q_v |\Phi\> = B_p |\Phi\> = |\Phi\>
\end{align}
for every vertex $v$ and plaquette $p$. 

Another key property of the Hamiltonian (\ref{snet-Ham}) is that its ground state $|\Phi\>$ has vanishing chiral central charge. Indeed, ground states of local commuting Hamiltonians always have $c_- = 0$, by the analysis given in Appendix D.1 of Ref.~\cite{Kitaev_2006}.\footnote{For a commuting Hamiltonian, the energy current $f_{jk} = 0$ in Eq.~(154) of Ref.~\cite{Kitaev_2006}, and hence one can choose $h_{jkl} = 0$ in Eq.~(159), which leads to a vanishing $c_-$ in Eq.~(160).} 

\subsection{Doubling the link degrees of freedom}

For technical reasons, it is simpler to compute $\wab$ for a slightly modified string-net ground state $|\tilde{\Phi}\>$ that has \emph{two} spins on each link (Fig.~\ref{fig:doubled_lattice}). This state $|\tilde{\Phi}\>$ is obtained from $|\Phi\>$ by making the replacement $|a\> \rightarrow |aa\>$ for each basis state $|a\>$ on each link $l$ of the honeycomb lattice. 

\begin{figure}
    \centering    
    \includegraphics[width=0.95\columnwidth]{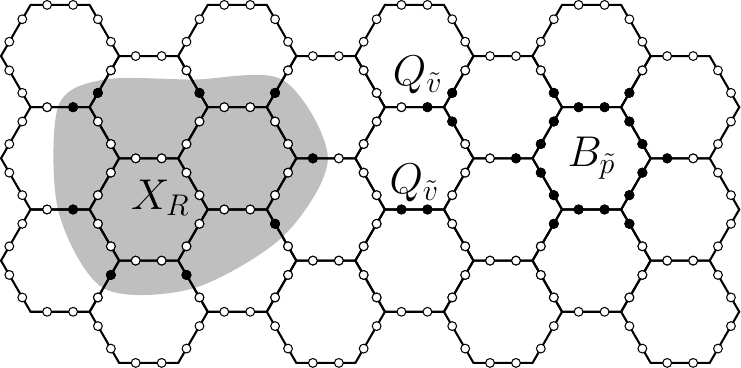}
    \caption{String-net model on the decorated honeycomb lattice. The $Q_{\tilde{v}}$ operator acts on either two or three spins around each vertex $\tilde{v}$. The $B_{\tilde{p}}$ operator acts on the 18 spins adjacent to the plaquette $\tilde{p}$. We focus on regions $R$ (shaded region) with a boundary $\partial R$ that separates the two spins on each boundary link; the operator $X_R$ acts on the spins just inside $R$, indicated by filled circles.}
    \label{fig:doubled_lattice}
\end{figure}

Conveniently, this modified state $|\tilde{\Phi}\>$ is also the ground state of a string-net Hamiltonian of the form (\ref{snet-Ham}), but on a \emph{decorated} honeycomb lattice. This decorated lattice is obtained by dividing each link of the original lattice into two links in the decorated lattice, placing additional (bivalent) vertices in between. We denote the vertices, links, and plaquettes of the decorated honeycomb lattice by $\tilde{v}, \tilde{l}, \tilde{p}$ respectively. (Here the plaquettes $\tilde{p}$ coincide with the plaquettes $p$ in the original honeycomb lattice). 

In this notation, the modified state $|\tilde{\Phi}\>$ is the ground state of a Hamiltonian of the form 
\begin{align}
H = - \sum_{\tilde{v}} Q_{\tilde{v}} - \sum_{\tilde{p}} B_{\tilde{p}}, 
\end{align}
where $Q_{\tilde{v}}$ and $B_{\tilde{p}}$ have the same structure as before: i.e.~$Q_{\tilde{v}}$ is supported on the two or three links $\tilde{l}$ adjacent to each vertex $\tilde{v}$, while $B_{\tilde{p}}$ is supported on the $12$ links $\tilde{l}$ along the boundary of $\tilde{p}$ as well as the $6$ ``legs'' of $\tilde{p}$ (Fig.~\ref{fig:doubled_lattice}).

The main advantage of working with the string-net state $|\tilde{\Phi}\>$ on the decorated honeycomb lattice is that it allows for more symmetrical partitions between different subregions $A, B, C$ of the lattice. In particular, we can choose each region $R$ so that the boundary $\partial R$ separates the two spins located on each boundary link (Fig.~\ref{fig:doubled_lattice}). In our calculation, we will assume that all regions $R = A, B, C$ have symmetrical boundaries of this type.

\subsection{Structure of $\rho_R$ for string-net ground states}
The key result that we will need to compute $\wab$ is the structure of the reduced density operator $\rho_R$ for simply-connected regions $R$. This structure was worked out by Ref.~\cite{levinwen2006} for the modified string-net ground state $|\tilde{\Phi}\>$. It was shown there that $\rho_R$ takes the form
\begin{align}
    \rho_R = X_R P_R 
    \label{rhoRidentity}
\end{align}
where $P_R$ is a projection operator and $X_R$ is an operator acting on the links $\tilde{l}$ along the boundary of $R$. More specifically, $P_R$ is defined as the product of all $Q_{\tilde{v}}$ and $B_{\tilde{p}}$ projectors that are supported entirely in $R$:
\footnote{In Ref.~\cite{levinwen2006}, $P_R$ was written in a different form -- namely as a projection onto the subspace of states $\{|\Phi^{\text{in}}_{q,s}\>\}$ obeying certain ``local rules'' in region $R$. In (\ref{rhoRidentity}) we use the fact that imposing the local rules is equivalent to imposing $Q_{\tilde{v}} = B_{\tilde{p}} = 1$ within region $R$.} 
\begin{align}
    P_R = \prod_{\text{supp}(Q_{\tilde{v}}) \subseteq R} Q_{\tilde{v}} \prod_{\text{supp}(B_{\tilde{p}}) \subseteq R} B_{\tilde{p}}
\label{Prdef}
\end{align}
The operator $X_R$ is defined as a product of single-link operators $X_{\tilde{l}}$ acting on the links $\tilde{l}$ along the boundary of $R$ (Fig.~\ref{fig:doubled_lattice}):
\begin{align}
    X_R = N_R \prod_{\tilde{l} \in \partial R} X_{\tilde{l}}, 
\label{Xrdef}
\end{align}
Here $N_R$ is the (positive) normalization constant $N_R = \frac{1}{D^{|\partial p|-1}}$ with $D = \sum_a d_a^2$, while the operator $X_{\tilde{l}}$ is the single-link operator acting on link $\tilde{l}$ that is diagonal in the string basis and has matrix elements 
\begin{align}
\<a' |X_{\tilde{l}} |a\> = d_a \delta_{aa'}
\end{align}

In our computation below, we will need the following properties of $X_R, P_R$:
\begin{enumerate}
\item{$X_R$ is positive definite.}\label{xpprop1}
\item{$[X_R, X_{R'}] = 0$ for any two regions $R, R'$.}\label{xpprop2}
\item{$P_R P_{R'} = P_{R'}$ if $R \subseteq R'$}\label{xpprop3}
\item{$[P_R, X_{R'}] = 0$ if $R \subseteq R'$.}\label{xpprop4}
\end{enumerate}
Properties \ref{xpprop1} and \ref{xpprop2} follow immediately from the definition of $X_R$ (\ref{Xrdef}) together with the fact that the quantum dimensions $d_a$ are positive for all $a$. Property \ref{xpprop3} follows from the fact that $Q_{\tilde{v}}$ and $B_{\tilde{p}}$ are commuting projectors. As for property \ref{xpprop4}, recall that any $B_{\tilde{p}}$ that appears in $P_R$ (\ref{Prdef}) is supported entirely in $R$. It follows that the plaquette boundary $\partial \tilde{p}$ cannot contain any $\tilde{l} \in \text{supp}(X_{R'})$ for any $R' \supseteq R$. Hence the only way that the $B_{\tilde{p}}$ operators in $P_R$ can act nontrivially on the links $\tilde{l} \in \text{supp}(X_{R'})$ is if $\tilde{l}$ is one of the ``legs'' of $\tilde{p}$. Then since $B_{\tilde{p}}$ acts diagonally in the string basis on the legs of $\tilde{p}$, we conclude that $[B_{\tilde{p}}, X_{R'}] = 0$ for any $B_{\tilde{p}}$ that appears in $P_R$. Furthermore, it is clear that $[Q_{\tilde{v}}, X_{R'}] = 0$ for all $\tilde{v}$ since $Q_{\tilde{v}}$ is diagonal in the string basis. Putting this all together gives $[P_R, X_{R'}] = 0$, as we wished to show. 

\subsection{Computing $\wab$}
We now use the above properties \ref{xpprop1}-\ref{xpprop4} to show that $\wab = 1$ for string-net models. First, we note that $\rho_R^\alpha$ can be rewritten as
\begin{align}
    \rho_R^\alpha &= \left(X_{R} P_{R} \right)^\alpha \nonumber \\
    &= X_R^\alpha P_R^\alpha
\end{align}
where the second line follows from the fact that $[P_R, X_R] = 0$ (property \ref{xpprop4} above). Next we note that
\begin{align}
\<\rho_{AB}^\alpha \rho_{BC}^\beta\> &= \Tr(\rho_{AB}^\alpha \rho_{BC}^\beta \rho_{ABC}) \nonumber \\
&= \Tr\left[\left(X_{AB}^\alpha P_{AB}^\alpha  \right) \left(X_{BC}^\beta P_{BC}^\beta \right) \left(X_{ABC} P_{ABC} \right)\right] \nonumber \\
&= \Tr( X_{AB}^\alpha X_{BC}^\beta X_{ABC} P_{BC}^\beta P_{ABC} P_{AB}^\alpha ) \nonumber \\
&= \Tr( X_{AB}^\alpha X_{BC}^\beta X_{ABC} P_{ABC})
\label{tracesnet}
\end{align}
where the third equality follows from property \ref{xpprop4} together with the cyclicity of the trace, while the last equality follows from property \ref{xpprop3}. 

To complete the calculation, we note that the product $X_{AB}^\alpha X_{BC}^\beta X_{ABC}$ is positive definite since it is a product of commuting positive definite operators (by properties \ref{xpprop1}-\ref{xpprop2}). It then follows that the above trace (\ref{tracesnet}) is strictly positive since $P_{ABC}$ is a (nonzero) projector. We conclude that
\begin{align}
\<\rho_{AB}^\alpha \rho_{BC}^\beta \> > 0
\end{align}
so that $\wab = 1$. This result establishes (\ref{wmn_conjecture}) for string-net ground states since these states all have $c_- = 0$, as discussed previously.

\section{Discussion}
\label{sec:discussion}
In this paper, we have proposed a R\'enyi generalization of the modular commutator, $\wab$, parameterized by two positive real numbers $\alpha, \beta$. Our quantity $\wab$ takes as input a gapped many-body state defined on an infinite 2D lattice, and produces as output a $U(1)$ phase. We have shown that for several classes of systems, this $U(1)$ phase $\wab$ takes a universal value related to the chiral central charge $c_-$ via the formula (\ref{wmn_conjecture}). In particular, we have derived (\ref{wmn_conjecture}) for gapped ground states of non-interacting fermion systems as well as ground states of string-net models.

As we mentioned in the introduction, the relationship (\ref{wmn_conjecture}) between $\wab$ and the chiral central charge $c_-$ does not hold in complete generality: there are (fine-tuned) counterexamples to (\ref{wmn_conjecture}) where $\wab$ takes on spurious values unrelated to the chiral central charge (see e.g.~the examples discussed in Ref.~\cite{Gass_2024}). Instead, the best we can hope for is that (\ref{wmn_conjecture}) holds for generic gapped states for a suitable definition of ``generic.'' A natural direction for future work is to understand whether this optimistic scenario holds. That is, do there exist counterexamples to (\ref{wmn_conjecture}) that are stable to small perturbations or are all counterexamples fine-tuned? These questions are also open for the original modular commutator $J$, but they may be easier to answer for its R\'enyi generalization $\wab$.

Another important question is to find a more direct connection between $\wab$ and the chiral central charge. One possible approach would be to use the alternative definition of $c_-$ recently proposed in Refs.~\cite{sopenko2024index, sopenko2025reflection} in the case of invertible states. This alternative definition is phrased in terms of permutation symmetry defects of replica systems, so it is possible that one can make a direct connection with the replica representation for $\wab$ (\ref{replicaform}).

Finally, we note that if we think of $\wab$ in terms of the replica representation (\ref{replicaform}), then $\wab$ can be regarded as part of a much larger family of quantities defined in terms of the expectation values of replica permutation operators acting on different subsystems. It seems likely that these kinds of replica permutation probes can extract more information than just the chiral central charge. In particular, Ref.~\cite{Sheffertopologicalspins} proposed a class of replica permutation probes that give information about topological spins and quantum dimensions of anyon excitations. It would be interesting to understand these probes in more generality.

\acknowledgments

J.G. and M.L. acknowledge the support of the Leinweber Institute for Theoretical Physics at the University of Chicago. This work was supported in part by a Simons Investigator grant (M.L.) and by the Simons Collaboration on Ultra-Quantum Matter, which is a grant from the Simons Foundation (651442, M.L.).

\emph{Note added} --- Related results on a R\'enyi generalization of the modular commutator were obtained independently in the recent paper, Ref.~\cite{sheffer2025}. 

\begin{appendix}

\section{Properties of quasidiagonal operators}
\label{app:quasidiagonal}
In this Appendix we derive properties (i) and (ii) of quasidiagonal operators given in \ref{sec:qd_nuP}. We restate them here for convenience:
\begin{enumerate}
    \item[(i)] If $X$ and $Y$ are quasidiagonal, their sum $X+Y$ and product $XY$ are as well.
    \item[(ii)] Let $X$ be quasidiagonal with spectrum $\sigma(X)$ and $f$ a holomorphic function on an open bounded region $D$ containing $\sigma(X)$. Then $f(X)$ is quasidiagonal. Moreover, $f(X)$ depends locally on $X$: for any quasidiagonal operator $V_R$ supported in region $R$ such that $\sigma(X+V_R)\subset D$, the matrix elements of $f(X+V_R)-f(X)$ decay exponentially outside of region $R$.
\end{enumerate}

We begin with property (i). Since $X$ and $Y$ are both quasidiagonal, there exists some $b,\gamma>0$ such that 
\begin{equation*}
    |X_{jk}|,|Y_{jk}|\leq be^{-\gamma|j-k|}.
\end{equation*}
The fact that their sum is quasidiagonal is obvious from the triangle inequality. To see that their product is quasidiagonal, choose some $0<\gamma'<\gamma$, and note that
\begin{align*}
    |(XY)_{jk}|&\leq \sum_l |X_{jl}||Y_{lk}|\\
    &\leq b^2\sum_l e^{-\gamma(|j-l|+|l-k|)}\\
    &\leq b^2 e^{-\gamma'|j-k|}\sum_l e^{-(\gamma-\gamma')(|j-l|+|l-k|)}\\
    &\leq b^2 \left(\sup_r\sum_l e^{-(\gamma-\gamma')|r-l|}\right)^2e^{-\gamma'|j-k|}.
\end{align*}

To show property (ii), we make use of a well-known fact about operators with exponential off-diagonal decay~\cite{jaffard_1990}: if $A$ is invertible with bounded inverse, then 
\begin{equation}
    |A_{jk}|\leq be^{-\gamma|j-k|}\implies |A_{jk}^{-1}|\leq b'e^{-\gamma'|j-k|},
\end{equation}
where $b',\gamma'>0$ depend only on $b,\gamma,$ and $\norm{A^{-1}}$. We then define $f(X)$ via the holomorphic functional calculus:
\begin{equation}
    f(X)=\frac{1}{2\pi i}\int_{\Gamma}f(z)\frac{1}{z-X}dz,
\end{equation}
where $\Gamma$ is a closed, rectifiable curve in $D$ that encloses $\sigma(X)$. For every $z\in\Gamma$, $(z-X)$ is quasidiagonal and invertible, so there exists $b_z,\gamma_z>0$ such that
\begin{equation*}
    \left|(z-X)^{-1}_{jk}\right|\leq b_z e^{-\gamma_z|j-k|}.
\end{equation*}
Let $b_0=\max_{z\in\Gamma}b_z$ and $\gamma_0=\min_{z\in\Gamma}\gamma_z$, then
\begin{equation*}
    |f(X)_{jk}|\leq \left(\frac{b_0}{2\pi}\int_{\Gamma}|f(z)|dz\right)e^{-\gamma_0|j-k|},
\end{equation*}
so $f(X)$ is quasidiagonal. To see the local dependence, choose $\Gamma$ so that it encloses both $\sigma(X)$ and $\sigma(X+V_R)$. Then by the second resolvent identity,
\begin{align}
    f(X+V&_R)-f(X)\nonumber\\
    &=\frac{1}{2\pi i}\int_{\Gamma}f(z)\left(\frac{1}{z-X-V_R}-\frac{1}{z-X}\right)dz\nonumber\\
    &=\frac{1}{2\pi i}\int_{\Gamma}f(z)\frac{1}{z-X-V_R}V_R\frac{1}{z-X}dz.
\end{align}
Since $(z-X-V_R)^{-1}$ and $(z-X)^{-1}$ are quasidiagonal, we find 
\begin{equation*}
    |(f(X+V_R)-f(X))_{jk}|\leq ce^{-\lambda\max\{|j-k|,\operatorname{dist}(j,R),\operatorname{dist}(k,R)\}},
\end{equation*}
for some $c,\lambda>0$.

\section{Derivation of Eq.~\eqref{mdef}}
\label{app:m_derivation}
In this Appendix, we derive Eq.~\eqref{mdef}, which we reprint below for convenience: 
\begin{equation}
\begin{gathered}
    \<\rho_{AB}^\alpha\rho_{BC}^\beta\>=\det M,\\
    M=(\id-P)+P(P_{AB}+P_{CD})^\alpha(P_{BC}+P_{AD})^\beta P.
\end{gathered}
\label{mdef_app}
\end{equation}

To begin, recall that traces of Gaussian fermionic operators $\exp(\hat{O})=\exp(\sum O_{ij}c_i^\dagger c_j)$ can be computed via determinants on the single-particle Hilbert space. For Hermitian $O$, we can diagonalize $O$ to obtain
\begin{align}
\label{gaussian_trace}
    \Tr\left(e^{\hat{O}}\right)=\Tr\left(e^{\sum O_{ij}c_i^\dagger c_j)}\right)
    &=\Tr\left(e^{\sum_{i}\epsilon_{i}\Gamma_i^\dagger \Gamma_i}\right)\nonumber\\
    &=\prod_i\left(1+e^{\epsilon_{i}}\right)\nonumber\\
    &=\det(\id+e^{O}).
\end{align}
This extends straightforwardly to non-Hermitian $O$ as well. In particular, since the product of Gaussian operators is also Gaussian, the finite temperature expectation value of a Gaussian operator is 
\begin{equation}
    \<e^{\hat{O}}\>_\beta=\frac{\Tr(e^{\hat{O}}e^{-\beta \hat{H}})}{\Tr (e^{-\beta \hat{H}})}=\frac{\det(\id+e^Oe^{-\beta h})}{\det(\id+e^{-\beta h})},
\end{equation}
where $\hat{H}$ is defined as in \eqref{complexH}. Taking the limit $\beta\rightarrow\infty$, we find that the 
ground state expectation value can be expressed in terms of the spectral projector as
\begin{equation}
\label{vev}
    \<e^{\hat{O}}\>=\det\left[(\id-P)+Pe^OP\right].
\end{equation}

Next, recall that the reduced density matrix $\rho_R$ of the many-body ground state $|\Psi\>$ is a Gaussian operator given in terms of $P$ by~\cite{Peschel_2003}
\begin{equation*}
    \rho_{R}=\frac{1}{Z_R}\exp\left(-\sum_{i,j\in R}k_{R,ij}c_i^\dagger c_j\right),\quad\  k_R=\ln\frac{\id-P_R}{P_R},
\end{equation*}
where, as in the main text, $P_R=RPR$ and $R$ denotes the spatial projection onto the corresponding region. Using this definition with Eq.~\eqref{vev}, we find that the expectation value in the numerator of Eq.~\eqref{wmn_definition} is
\begin{multline}
\label{pabpbc_expanded}
    \<\rho_{AB}^{\alpha}\rho_{BC}^{\beta}\>=\det\left[(\id-P)+ Pe^{-\alpha k_{AB}} e^{-\beta k_{BC}}P\right]\\
    \times Z_{AB}^{-\alpha}Z_{BC}^{-\beta}.
\end{multline}
The normalization constants are determined by requiring $\Tr \ \rho_{R}=1$, and are given by 
\begin{align}
\label{normalization}
    Z_R=\rdet{R}(\id+e^{-k_{R}})&=\rdet{R}\left(\id+\frac{P_R}{\id-P_R}\right)\nonumber\\
    &=\rdet{R}(\id-RPR)^{-1}\nonumber\\
    &=\det(\id-RPR)^{-1}\nonumber\\
    &=\det(\id-PRP)^{-1},
\end{align}
where $\rdet{R}$ indicates that the determinant is restricted to indices corresponding to sites in $R$, and in the last step we used the determinant property $\det(\id+XY)=\det(\id+YX)$.
We also have the identity 
\begin{align}
    Pe^{-\alpha k_{AB}}
    &=P\left[\left(\frac{P_{AB}}{\id-P_{AB}}\right)^{\alpha}+CD\right]\nonumber\\
    &=\left[\left(\frac{PABP}{\id-PABP}\right)^{\alpha}PAB+PCD\right]\nonumber\\
    &=(\id-PABP)^{-\alpha}\bigl[(PABP)^{\alpha}PAB\nonumber\\
    &\qquad\qquad\qquad\qquad+(\id-PABP)^{\alpha}PCD\bigr]\nonumber\\
    &=(\id-PABP)^{-\alpha}\bigl[(PABP)^{\alpha}PAB\nonumber\\
    &\qquad\qquad\qquad\qquad+(PCDP)^{\alpha}PCD\bigr]\nonumber\\
    &=(\id-PABP)^{-\alpha}P(P_{AB}^\alpha+P_{CD}^\alpha),
\end{align}
where $AB$ and $CD$ denote the projections onto the combined regions $A\cup B$ and $C\cup D$. In the second and fifth equalities we used the fact that $Pf(RPR)=f(PRP)R$ for any function $f$ with $f(0)=0$. The analogous identity for $e^{-\beta k_{BC}}P$ is 
\begin{equation}
\label{ekbcp_identity}
    e^{-\beta k_{BC}}P=(P_{BC}^\beta +P_{AD}^\beta)P(\id-PBCP)^{-\beta}.
\end{equation}
Inserting equations \eqref{normalization} through \eqref{ekbcp_identity} into Eq.~\eqref{pabpbc_expanded}, we obtain Eq.~\eqref{mdef_app} or equivalently Eq.~\eqref{mdef} from the main text.

\section{Invertibility of $M$}
\label{app:m_invertibility}
In this Appendix, we show that 
\begin{equation*}
    M=(\id-P)+P(P_{AB}+P_{CD})^\alpha(P_{BC}+P_{AD})^\beta P
\end{equation*}
is invertible for any spectral projector $P$ and any choice of regions $A, B, C, D$. This result implies that $\wab$ is a well-defined $U(1)$ phase for general non-interacting fermion systems.

For convenience, throughout this Appendix, we use the abbreviations $X=P_{AB}+P_{CD}$ and $Y=P_{BC}+P_{AD}$. We prove that $M$ is invertible by showing that $P X^\alpha Y^\beta P$ is bijective on the subspace $\imp$, i.e.~the image of $P$ on the single-particle Hilbert space $\mathcal{H}$. 

First, we note that
\begin{align*}
    PX^\alpha P&=(PABP)^{\alpha+1}+(PCDP)^{\alpha+1}\\
    &=(PABP)^{\alpha+1}+(P-PABP)^{\alpha+1},
\end{align*}
where $AB$ and $CD$ denote the projections onto the regions $A \cup B$ and $C \cup D$ as in the previous appendix. This gives
\begin{equation}
    2(PX^\alpha P)^2-PX^{2\alpha}P=Pf(PABP)P,
\end{equation}
where
\begin{equation*}
    f(x)=2[x^{\alpha+1}+(1-x)^{\alpha+1}]^2-[x^{2\alpha+1}+(1-x)^{2\alpha+1}].
\end{equation*}

Next we note that $PABP$ is a positive semi-definite operator with eigenvalues that lie in $[0,1]$. Also, one can check that $f(x)$ is positive for $x\in[0,1]$ with a minimum value of $1/2^{2\alpha}$. We therefore have the inequality
\begin{equation*}
    2PX^\alpha PX^\alpha P\geq PX^{2\alpha}P+\frac{1}{2^{2\alpha}}P\\
\end{equation*}
and therefore
\begin{equation}
PX^\alpha PX^\alpha P\geq PX^\alpha(\id-P)X^\alpha P+\frac{1}{2^{2\alpha}}P,
\end{equation}
where $A\geq B$ means that $A-B$ is positive semi-definite. We also have the analogous inequality for $Y^\beta$. We deduce then that for any unit vector $|v\>\in\imp$, 
\begin{align}
\label{inequalities}
\begin{split}
    \norm{PX^\alpha|v\>}^2&\geq \norm{(\id-P)X^\alpha|v\>}^2+\frac{1}{2^{2\alpha}}\\
    \norm{PY^\beta|v\>}^2&\geq \norm{(\id-P)Y^\beta|v\>}^2+\frac{1}{2^{2\beta}}. 
\end{split}
\end{align}
In particular, this tells us that $X^\alpha$ and $Y^\beta$ are injective on $\imp$ and that the principal angles between the subspaces $X^{\alpha}\imp$ and $\imp$, as well as the principal angles between the subspaces $Y^{\beta}\imp$ and $\imp$ are all strictly less than $\pi/4$. This bound on the principle angles proves the claim: it follows that $X^{\alpha}\imp$ and $Y^{\beta}\imp$ are strictly non-orthogonal to each other, in the sense that there is no $|y\>\in Y^\beta\imp$ that is in the null space of $PX^\alpha$ and there is no $|x\>\in X^\alpha\imp$ that is in the null space of $PY^\beta$.

For an explicit lower bound on the spectrum, note that Eq.~\eqref{inequalities} implies $PX^\alpha P$ and $PY^\beta P$ are both invertible on $\imp$. Therefore, for any unit vector $|w\>\in\imp$ there exists a unit vector $|v\>\in\imp$ such that $PX^\alpha|v\>$ is parallel to $PY^\beta|w\>$. We then find from the triangle and Cauchy-Schwarz inequalities
\begin{align}
\label{specbound}
    \norm{PX^\alpha Y^\beta|w\>}&\geq |\<v|X^\alpha Y^\beta|w\>|\nonumber\\
    &\geq |\<v|X^\alpha PY^\beta|w\>|\nonumber\\
    &\qquad\qquad-|\<v|X^\alpha(\id-P)Y^\beta|w\>|\nonumber\\
    &=\norm{PX^\alpha|v\>}\norm{PY^\beta|w\>}\nonumber\\
    &\qquad\qquad-|\<v|X^\alpha(\id-P)Y^\beta|w\>|\nonumber\\
    &\geq\norm{PX^\alpha|v\>}\norm{PY^\beta|w\>}\nonumber\\
    &\quad-\norm{(\id-P)X^\alpha|v\>}\norm{(\id-P)Y^\beta|w\>}\nonumber\\
    &\geq 1/2^{2(\alpha+\beta+1)}, 
\end{align}
where the final bound comes from the following inequalities, which in turn follow from Eq.~(\ref{inequalities}):
\begin{equation}
\begin{aligned}
    \norm{PX^\alpha|v\>}&\geq \norm{(\id-P)X^\alpha|v\>}+\frac{1}{2^{2\alpha+1}}\\
    \norm{PY^\beta|v\>}&\geq \norm{(\id-P)Y^\beta|v\>}+\frac{1}{2^{2\beta+1}}. 
\end{aligned}
\end{equation}

\section{Details for $\det \uabc$}
\label{app:determinant_derviation}
In this Appendix we fill in some gaps in the arguments that follow Eq.~\eqref{uabc_claim}. To begin, we note that Eq.~\eqref{uabc_traces} can be derived more rigorously by inserting a regulator $\epsilon>0$ into the definitions of the $O_i$ in Eq.~\eqref{oi_def}: 
\begin{align}
    O_1^\epsilon&=\ln\left(\ide-\rc\right)^\alpha\nonumber\\
    O_2^\epsilon&=\ln\left(\id-\ra-\rc+\frac{\rce^{\alpha+1}}{(\ide-\rc)^{\alpha}}+\frac{\rae^{\beta+1}}{(\ide-\ra)^{\beta}}\right)\\
    O_3^\epsilon&=\ln\left(\ide-\ra\right)^\beta\nonumber,
\end{align}
where $X_\epsilon\equiv X+\epsilon\id$. The $O_i^\epsilon$ are bounded operators, since $\ra$ and $\rc$ have spectra that lie in $[0,1]$. We then define $\mabc^\epsilon=e^{O_1^\epsilon}e^{O_2^\epsilon}e^{O_3^\epsilon}$ with $\uabc^\epsilon$ as its unitary part so that 
\begin{equation*}
    \det \uabc = \lim_{\epsilon\rightarrow 0} \det \uabc^\epsilon.
\end{equation*}
We apply the Baker-Campbell-Hausdorff expansion on both $\mabc^\epsilon$ and $|\mabc^\epsilon|$, which gives
\begin{align}
    \mabc^\epsilon&=e^{O_1^\epsilon}e^{O_2^\epsilon}e^{O_3^\epsilon}\nonumber\\
    &=\exp\biggl\{O_1^\epsilon+O_2^\epsilon+O_3^\epsilon\nonumber\\
    &\qquad\quad\ +\frac{[O_1^\epsilon,O_2^\epsilon]+[O_2^\epsilon,O_3^\epsilon]+[O_1^\epsilon,O_3^\epsilon]}{2}+(\cdots)\biggr\}
\end{align}
and
\begin{align}
    |\mabc^\epsilon|&=\exp\left\{\frac{1}{2}\ln(e^{O_1^\epsilon}e^{O_2^\epsilon}e^{2O_3^\epsilon}e^{O_2^\epsilon}e^{O_1^\epsilon})\right\}\nonumber\\
    &=\exp\biggl\{O_1^\epsilon+O_2^\epsilon+O_3^\epsilon+(\cdots)\biggr\},
\end{align}
where $(\cdots)$ represents higher commutators. Combining these two, we find
\begin{align}
    \uabc^\epsilon&=\mabc^\epsilon|\mabc^\epsilon|^{-1}\nonumber\\
    &=\exp\biggl\{\frac{[O_1^\epsilon,O_2^\epsilon]+[O_2^\epsilon,O_3^\epsilon]+[O_1^\epsilon,O_3^\epsilon]}{2}+(\cdots)\biggr\}.
\end{align}

We now claim that the commutators between the $O_i^\epsilon$ are ``trace class.'' Recall that an operator $T$ is trace class if $\Tr|T|<\infty$, where $|T|=\sqrt{T^\dagger T}$. Being trace class guarantees that $T$ has a well-defined trace that converges to a finite number, independent of the chosen basis. We also have the cyclic property for such operators: if $T$ is trace class and $O$ is bounded, then $\Tr(OT)=\Tr(TO)$. This cyclic property is particularly important in our case: once we show that the above commutators are trace class then by the cyclic property, it follows that all the higher commutators in the above expression for $\det \uabc^\epsilon$ have vanishing trace and therefore
\begin{multline}
    \det \uabc^\epsilon=\exp\frac{1}{2}\biggl\{\Tr[O_1^\epsilon,O_2^\epsilon]+\Tr[O_2^\epsilon,O_3^\epsilon]\\
    +\Tr[O_1^\epsilon,O_3^\epsilon])\biggr\}.
\end{multline}
We can then derive Eq.~\eqref{uabc_traces} by taking the limit $\epsilon\rightarrow 0$.

We can see that the $[O_i^\epsilon,O_j^\epsilon]$ are indeed trace class via the following lemma:
\newtheorem{theorem}{Theorem}[section]
\newtheorem{lemma}[theorem]{Lemma}
\begin{lemma}
\label{lem:trace_class}
    Let $X$ and $Y$ be bounded operators with trace class commutator. For any holomorphic functions $f$ and $g$ for which we can define $f(X)$ and $g(Y)$ via the holomorphic functional calculus, $[f(X),g(Y)]$ is also trace class.
\end{lemma}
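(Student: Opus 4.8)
The plan is to push everything down to the resolvent level using the holomorphic functional calculus and then exploit the fact that the trace-class operators form a two-sided ideal inside the bounded operators. Write
\begin{equation*}
f(X)=\frac{1}{2\pi i}\int_{\Gamma}f(z)(z-X)^{-1}\,dz,\qquad g(Y)=\frac{1}{2\pi i}\int_{\Gamma'}g(w)(w-Y)^{-1}\,dw,
\end{equation*}
where $\Gamma$ (resp.\ $\Gamma'$) is a rectifiable contour enclosing $\sigma(X)$ (resp.\ $\sigma(Y)$) inside the region of holomorphy of $f$ (resp.\ $g$). Then
\begin{equation*}
[f(X),g(Y)]=\frac{1}{(2\pi i)^2}\int_{\Gamma}\int_{\Gamma'}f(z)g(w)\,\bigl[(z-X)^{-1},(w-Y)^{-1}\bigr]\,dw\,dz,
\end{equation*}
so it suffices to control the commutator of two resolvents and then integrate.

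The key algebraic step is the identity
\begin{equation*}
\bigl[(z-X)^{-1},(w-Y)^{-1}\bigr]=(z-X)^{-1}(w-Y)^{-1}\,[X,Y]\,(w-Y)^{-1}(z-X)^{-1},
\end{equation*}
which I would get from two applications of $[A^{-1},C]=-A^{-1}[A,C]A^{-1}$ together with $[z-X,w-Y]=[X,Y]$. Since $[X,Y]$ is trace class by hypothesis and trace-class operators are a two-sided ideal, the right-hand side is trace class for every $(z,w)$, with
\begin{equation*}
\norm{\bigl[(z-X)^{-1},(w-Y)^{-1}\bigr]}_1\le \norm{(z-X)^{-1}}^2\,\norm{(w-Y)^{-1}}^2\,\norm{[X,Y]}_1.
\end{equation*}
Because $\Gamma$ and $\Gamma'$ are compact and disjoint from $\sigma(X)$ and $\sigma(Y)$, the resolvent norms are bounded by finite constants $M_X$, $M_Y$ on the respective contours, so the integrand of the double integral is a trace-class-valued function on $\Gamma\times\Gamma'$ that is continuous in $(z,w)$ and uniformly bounded in trace norm.

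It then remains to interpret the double integral as a Bochner integral in the Banach space of trace-class operators: the integrand is continuous and its trace norm is integrable over the finite-length product contour, so the Bochner integral exists there. Since the trace norm dominates the operator norm, this trace-class-valued integral coincides with the operator-norm integral, which by the functional calculus equals $[f(X),g(Y)]$; hence $[f(X),g(Y)]$ is trace class, and one obtains the explicit estimate
\begin{equation*}
\norm{[f(X),g(Y)]}_1\le \frac{M_X^2 M_Y^2}{(2\pi)^2}\,\norm{[X,Y]}_1\left(\int_{\Gamma}|f(z)|\,|dz|\right)\left(\int_{\Gamma'}|g(w)|\,|dw|\right).
\end{equation*}
The computation is short, and the only point that needs genuine care is this last one --- making sure the formal manipulations are justified in the trace-class topology rather than merely weakly or in operator norm --- which is handled by the standard Bochner-integral theory once the uniform trace-norm bound on the integrand is in place. (One could instead prove the lemma first for polynomials $f,g$, where $[X^m,Y^n]$ is trace class by iterating the Leibniz identity so that every term carries a factor of $[X,Y]$, and then approximate; but the contour-integral route avoids having to control the trace-norm convergence of the polynomial approximants, so I would use it.)
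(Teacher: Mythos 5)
Your proof is correct and follows essentially the same route as the paper: represent $f(X)$ and $g(Y)$ by contour integrals, reduce to the resolvent commutator identity $\bigl[(z-X)^{-1},(w-Y)^{-1}\bigr]=(z-X)^{-1}(w-Y)^{-1}[X,Y](w-Y)^{-1}(z-X)^{-1}$, and use the ideal property of trace-class operators together with boundedness of the resolvents on the compact contours to get the trace-norm bound. Your added remark about interpreting the double integral as a Bochner integral in the trace-class Banach space is a careful justification of a step the paper leaves implicit, but the substance of the argument is identical.
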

\begin{proof}
    We have
    \begin{equation}
    \begin{aligned}
        f(X)&=\frac{1}{2\pi i}\int_{\Gamma_1}f(z)\frac{1}{z\id-X}dz,\\
        g(Y)&=\frac{1}{2\pi i}\int_{\Gamma_2}g(w)\frac{1}{w\id-Y}dw,
    \end{aligned}
    \end{equation}
    where $\Gamma_{1(2)}$ is a closed rectifiable curve in the domain of $f(g)$ that encloses the spectrum of $X(Y)$. We have the resolvent commutator identity
    \begin{multline}
        \left[\frac{1}{z\id-X},\frac{1}{w\id-Y}\right]\\=\frac{1}{z\id-X}\frac{1}{w\id-Y}[X,Y]\frac{1}{w\id-Y}\frac{1}{z\id-X},
    \end{multline}
    and it can be shown that $\Tr|AB|\leq \norm{A}\Tr|B|$. Combining these two facts, we find
    \begin{multline*}
        \Tr\left|\left[\frac{1}{z\id-X},\frac{1}{w\id-Y}\right]\right|\\\leq \norm{\frac{1}{z\id-X}}^2\norm{\frac{1}{w\id-Y}}^2\Tr\left|[X,Y]\right|.
    \end{multline*}
    We therefore have that $\Tr|[f(X),g(Y)]|\leq ab\Tr|[X,Y]|$, where $a$ and $b$ are finite constants given by
    \begin{align*}
        a&=\frac{1}{2\pi}\int_{\Gamma_1}|f(z)|\norm{\frac{1}{z\id-X}}^2 dz,\\
        b&=\frac{1}{2\pi}\int_{\Gamma_2}|g(w)|\norm{\frac{1}{w\id-Y}}^2 dw.
    \end{align*}
\end{proof}
To apply this lemma, we use the fact that the commutator $[\ra,\rc]$ is trace class, which can be established using the exponential decay of the diagonal elements of $[\ra,\rc]$ away from the triple point. The lemma then implies that  $[O_1^\epsilon,O_3^\epsilon]$ is trace class. The fact that the two commutators $[O_1^\epsilon,O_2^\epsilon]$ and $[O_2^\epsilon,O_3^\epsilon]$ are also trace class follows by applying the lemma twice. 

Having derived \eqref{uabc_traces}, all that remains is to compute the traces in this expression. To do this, we note that any product that contains at least one copy of each of the three operators $\ra,\rb,\rc$ will decay exponentially away from the triple point (which follows from quasidiagonality), and will be trace class. Furthermore, any operator obtained by cyclically permuting the terms in the product will also be trace class and have the same trace. Therefore, any commutator of the form $[L,R]$ such that the product $LR$ contains at least one copy each of $\ra,\rb,\rc$ will have vanishing trace. This gives us the following identity, for any function $g(X,Y)$ that can be written formally as a power series in $X$ and $Y$:
\begin{align}
    \Tr[\rc,g(\ra,\rc)]&=\Tr\left[\rc,g(\ra,\rc)-g(0,\rc)\right]\nonumber\\
    &=\Tr\left[\rc,g(\ra,\id-\ra-\rb)-g(0,\id-\ra-\rb)\right]\nonumber\\
    &=\Tr\left[\rc,g(\ra,\id-\ra)-g(0,\id-\ra)\right].
\end{align}
In the first line, we subtract $g(0,\rc)$ so that every term in the formal expansion of the right half of the commutator explicitly contains at least one copy of $\ra$. In the second line, we insert the identity $\id=(\id-P)+\ra+\rb+\rc$. The third line then follows from the fact that, since each term in the commutator contains at least one copy of $\ra$ from the right side and one copy of $\rc$ from the left side, any term that also contains $\rb$ will not contribute to the overall trace and can be neglected. 

Applying this identity to $\Tr[O_1,O_2]$, we find
\begin{align}   
\label{o1o2}
    \Tr[O_1,O_2]&=\Tr\left[O_1,\ln\left(\frac{(\id-\ra)^{\alpha+1}}{\ra^\alpha}+\frac{\ra^{\beta+1}}{(\id-\ra)^{\beta}}\right)\right]\nonumber\\
    &\quad-\Tr\left[O_1,\ln\left(\ra+\frac{(\id-\ra)^{\alpha+1}}{\ra^\alpha}\right)\right]\nonumber\\
    &=\Tr\left[O_1,\ln\left((\id-\ra)^{\alpha+\beta+1}+\ra^{\alpha+\beta+1}\right)\right]\nonumber\\
    &\quad-\Tr\left[O_1,\ln\left((\id-\ra)^{\alpha+1}+\ra^{\alpha+1}\right)\right]\nonumber\\
    &\quad-\Tr\left[O_1,\ln(\id-\ra)^{\beta}\right].
\end{align}
All of these terms are now in a form where we can use \eqref{integralform}, with the relevant integral being
\begin{equation*}
    \int_{0}^{1} \ln (1-t) \partial_{t} \ln \left[(1-t)^x+t^x\right] dt=-\frac{\pi^2}{12}\left(x-\frac{1}{x}\right)
\end{equation*}
for $x>0$. Explicitly, summing the contributions from the traces in \eqref{o1o2} gives
\begin{equation*}
    \Tr[O_1, O_2]
    =-\frac{\pi i\nu(P)}{24}\alpha\left(\frac{1}{\alpha+1}-\frac{1}{\alpha+\beta+1}-\beta\right),
\end{equation*}
as in the main text. The third trace $\Tr[O_2,O_3]$ can be broken up in a similar fashion:
\begin{align}   
\label{o2o3}
    \Tr[O_2,O_3]&=-\Tr[O_3,O_2]\nonumber\\
    &=-\Tr\left[O_3, \ln\left((\id-\rc)^{\alpha+\beta+1}+\rc^{\alpha+\beta+1}\right)\right]\nonumber\\
    &\quad+\Tr\left[O_3, \ln\left((\id-\rc)^{\beta+1}+\rc^{\beta+1}\right)\right]\nonumber\\
    &\quad+\Tr\left[O_3, \ln(\id-\rc)^{\alpha}\right].
\end{align}
Note that \eqref{o2o3} is equal to \eqref{o1o2} after swapping $\alpha\leftrightarrow \beta$, $\ra\leftrightarrow\rc$, and multiplying the right side by $(-1)$. This sign change compensates for the fact that we are now taking the trace of a commutator with opposite chirality of that in \eqref{o1o2}. Put differently, we have the general property
\begin{equation*}
    \Tr[f(\ra),g(\rc)]=-\Tr[f(\rc),g(\ra)],
\end{equation*}
so we find that $\Tr[O_2,O_3]$ is equal to $\Tr[O_1,O_2]$ under the exchange of $\alpha$ and $\beta$, giving 
\begin{equation*}
    \Tr[O_2, O_3]
    =-\frac{\pi i\nu(P)}{24}\beta\left(\frac{1}{\beta+1}-\frac{1}{\alpha+\beta+1}-\alpha\right),
\end{equation*}
as in the main text. 

\section{Majorana derivation}
\label{app:majorana}

In this Appendix, we derive Eq.~\eqref{mdef_majorana} as well as give an expression for $\<\rho_{AB}^\alpha\rho_{BC}^\beta\>$ for Majorana fermions in terms of Pfaffians. As stated in the main text, a non-interacting Majorana fermion Hamiltonian takes the form
\begin{equation}
    \hat{H}=\frac{i}{4}\sum_{j,k}h_{jk}\chi_j \chi_k\qquad \{\chi_j,\chi_k\}=2\delta_{jk},
\end{equation}
where $h$ is a real skew-symmetric matrix of even dimension that has eigenvalues $\pm i\epsilon_k$. Such an operator can be put into the canonical form 
\begin{equation*}
    \hat{H}=\frac{i}{2}\sum_{k}\epsilon_k b_k' b_k''=\sum_k \epsilon_k(a_k^\dagger a_k-1/2), \quad\epsilon_k \geq 0
\end{equation*}
so that the trace of its exponential is given by
\begin{align}
\label{majorana_trace}
    \Tr e^{\hat{H}}= \prod_k \left(e^{\epsilon_k/2}+e^{-\epsilon_k/2}\right)
    &=\frac{\prod_k\left(e^{\epsilon_k}-e^{-\epsilon_k}\right)}{\prod_k\left(e^{\epsilon_k/2}-e^{-\epsilon_k/2}\right)}\nonumber\\
    &=\frac{\operatorname{pf}[\sinh(ih)]}{\operatorname{pf}[\sinh(ih/2)]},
\end{align}
where $\operatorname{pf}$ is the Pfaffian. Since both sides of Eq.~\eqref{majorana_trace} are analytic in the matrix elements of $h$, this identity holds for arbitrary (i.e.~not necessarily real) skew-symmetric matrices.

Now let $\hat{O}$ be another quadratic Majorana operator defined by $\hat{O}=\frac{i}{4}\sum_{j,k}O_{jk}\chi_j\chi_k$, where $O$ is skew-symmetric. Then $e^{\hat{O}}e^{\hat{H}}=e^{\hat{K}}$, where $\hat{K}=\frac{1}{4}\sum_{j,k}\ln(e^{iO}e^{ih})_{jk}\chi_j\chi_k$, and where $\ln(e^{iO}e^{ih})$ is also skew-symmetric. We can therefore use Eq.~\eqref{majorana_trace} to find the finite-temperature expectation value of $e^{\hat{O}}$:
\begin{align}
    \<e^{\hat{O}}\>_\beta&=\frac{\Tr(e^{\hat{O}}e^{-\beta \hat{H}})}{\Tr (e^{-\beta \hat{H}})}\nonumber\\
    &=\frac{\operatorname{pf}[\sinh\ln(e^{iO}e^{-\beta ih})]}{\operatorname{pf}[\sinh\frac{1}{2}\ln(e^{iO}e^{-\beta ih})]}\frac{\operatorname{pf}[\sinh(-\beta ih/2)]}{\operatorname{pf}[\sinh(-\beta ih)]}.
\end{align}
The ground state expectation value is then given by taking the limit $\beta\rightarrow\infty$. 

As mentioned in the main text, rather than evaluate this equation further, it is easier to instead compute the square using the identity $\operatorname{pf}(A)^2=\det A$, and then determine the square root via continuity. Inserting this identity above, we find 
\begin{align}
    \<e^{\hat{O}}\>^2&=\lim_{\beta\rightarrow\infty}\frac{\det[2\cosh\frac{1}{2}\ln(e^{iO}e^{-\beta ih})]}{\det[{2\cosh(-\beta ih/2)}]}\nonumber\\
    &=\lim_{\beta\rightarrow\infty}\frac{\det[\id+e^{iO}e^{-\beta ih}]}{\det[\id+e^{-\beta ih}]}\frac{\det[e^{-\frac{1}{2}\ln(e^{iO}e^{-\beta ih})}]}{\det[e^{-\beta ih/2}]}\nonumber\\
    &=\lim_{\beta\rightarrow\infty}\frac{\det[\id+e^{iO}e^{-\beta ih}]}{\det[\id+e^{-\beta ih}]}\nonumber\\
    &=\det[(\id-P)+Pe^{iO}P],
\end{align}
where the third equality comes from the fact that the determinant of an exponential of a skew-symmetric matrix is 1. 

For non-interacting Majorana fermions, reduced density matrices of the ground state take the form
\begin{equation*}
    \rho_{R}=\frac{1}{Z_R}\exp\left(-\frac{i}{4}\sum_{j,k\in R}k_{R,jk}\chi_j \chi_k\right),\ \ ik_R=\ln\frac{\id-P_R}{P_R}.
\end{equation*}
Note that $ik_R$ is skew-symmetric since $P^T=\id-P$. We can therefore replace $e^{\hat{O}}$ with $\rho_{AB}^\alpha\rho_{BC}^\beta$ in the above equations to find expressions for $\<\rho_{AB}^\alpha\rho_{BC}^\beta\>$ and $\<\rho_{AB}^\alpha\rho_{BC}^\beta\>^2$. For the square, we find 
\begin{multline}
    \<\rho_{AB}^{\alpha}\rho_{BC}^{\beta}\>^2=\det\left[(\id-P)+ Pe^{-\alpha ik_{AB}} e^{-\beta ik_{BC}}P\right]\\
    \times Z_{AB}^{-\alpha}Z_{BC}^{-\beta},
\end{multline}
which has the same right-hand side as Eq.~\eqref{pabpbc_expanded} for complex fermions. From this point, we simply repeat the same steps following \eqref{pabpbc_expanded} to derive Eq.~\eqref{mdef_majorana}.

\end{appendix}

\bibliography{Renyi_modcom}
\end{document}